\let\doendproof\endproof{}
\renewcommand\endproof{~\hfill$\qed$\doendproof}
\newif\ifabstract{}
\newif\iffull{}
\else \fulltrue{} \fi
\newtoks\magicAppendix{}
\newtoks\magictoks{}
\newif\iflater{}
\long\def\later#1{\magictoks={#1}%
\edef\magictodo{\noexpand\magicAppendix={\the\magicAppendix{} \par
    \noexpand\setcounter{proposition}{\arabic{proposition}}%
    \noexpand\setcounter{theorem}{\arabic{theorem}}%
    \noexpand\setcounter{lemma}{\arabic{lemma}}%
    \noexpand\setcounter{definition}{\arabic{definition}}%
    \noexpand\setcounter{corollary}{\arabic{corollary}}%
    \noexpand\setcounter{remark}{\arabic{remark}}%
    % repeat the above line for all counters you want to preserve (lemma, etc.)
    \the\magictoks}}%
  \magictodo}
\long\def\both#1{\later{#1}\the\magictoks}
\long\def\both#1{#1}
\long\def\later#1{#1}
\def\magicappendix{\latertrue{} \the\magicAppendix}
\newcommand{\ep}{\varepsilon}
\newcommand{\LL}{\mathcal{L}}
\newcommand{\mapmrf}{\textsc{Pairwise MAP MRF}\xspace}
\DeclareMathOperator{\OPT}{OPT}
\DeclareMathOperator{\poly}{poly}
\DeclareMathOperator*{\argmax}{arg\,max}
\title{PTAS for MAP Assignment on
       Pairwise Markov Random Fields in Planar Graphs}
\author{%
Eli Fox-Epstein
\and
Roie Levin
\and
David Meierfrankenfeld
}
\institute{%
  Department of Computer Science,
  Brown University,
  Providence RI\@.
  \texttt{\{ef,rl46,nfelddav\}@cs.brown.edu}
}
\begin{document}

\maketitle

\begin{abstract}
We present a PTAS for computing the maximum a posteriori assignment
  on Pairwise Markov Random Fields with
  non-negative weights in planar graphs.
This algorithm is practical and not far behind
  state-of-the-art techniques in image processing.
MAP on Pairwise Markov Random Fields with (possibly) negative weights
  cannot be approximated unless P = NP, even on planar graphs.
We also show via reduction that this yields a PTAS
  for one scoring function
  of Correlation Clustering in planar graphs.
\end{abstract}

\section{Introduction}

Pairwise Markov Random Fields (MRFs) model distributions in a variety of applications 
  and arise in fields as diverse as statistical physics, computer vision, coding theory,
  computational biology, machine learning, and combinatorial optimization.
Solving associated optimization problems is critical in practice and
  also of high theoretical importance.

We briefly review the statistical view on MRFs before focusing on the combinatorial
  problem.
A pairwise MRF is a set of $n$ random variables $\bm{X} = \{X_1, \ldots, X_n\}$ over
  label set $\{1, \ldots, L\}$,
  a graph~$G = (\bm{X}, E)$, where
\[
\Pr[\bm{X}=\bm{x}] = \frac{1}{Z}\exp{\left(\sum\limits_{i \in V} \phi_i(x_i) + \sum\limits_{(i,j) \in E} \psi_{ij}(x_i,x_j)\right)},
\]
where $\phi_i$ and $\psi_{ij}$ are arbitrary functions
  and $Z$ is a normalizing constant.
Intuitively, $\phi_i(x_i)$ can be regarded as vertex $i$'s preference
  for label $x_i$ and $\psi_{ij}(x_i, x_j)$ as the compatibility
  between labels $x_i$ and $x_j$ on the endpoints of edge $ij$.
We are interested in finding a maximum a posteriori (MAP) assignment~$\bm{x}^*$,
  i.e.~$\bm{x}^* = \argmax_{\bm{x}} \Pr[\bm{X} = \bm{x}]$.
Finding the MAP label assignment corresponds to this optimization problem:
\begin{mdframed}[nobreak=true]
  \noindent \mapmrf{}

  \begin{description}
    \item[Instance:]\ 
      \begin{itemize}
        \item graph $G = (V, E)$,
        \item label set $\LL = \{1, \ldots, L\}$,
        \item singleton functions $\phi_i(\cdot) : \LL \to \mathbb{R} \quad \forall \; i \in V$,
        \item pairwise functions $\psi_{ij}(\cdot, \cdot) : \LL \times \LL \to \mathbb{R} \quad \forall \; (i, j) \in E$.
      \end{itemize}

    \item[Solution:] for each $v \in V$, label assignment $x_v \in \LL$

    \item[Maximize:]
      \[
      H(\bm{x}) = \sum\limits_{v \in V} \phi_i(x_v) + \sum\limits_{(u,v) \in E} \psi_{uv}(x_u,x_v),
      \]
  \end{description}
\end{mdframed}
Throughout the paper, we will assume $G$ to be connected;
  combining the solutions on each component handles the case of
  a disconnected graph.

\mapmrf{} has been considered in many domains, however, in general graphs we will show:

\both{%
\begin{theorem}\label{thm:general-hard-to-approximate}
  There is an $\alpha > 0$ such that, unless P = NP,
  there is no polynomial-time $\alpha$-approximation algorithm for \mapmrf{},
    even for nonnegative $\phi$ and $\psi$.
\end{theorem}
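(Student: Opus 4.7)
The plan is to prove Theorem~\ref{thm:general-hard-to-approximate} by a straightforward approximation-preserving reduction from a known APX-hard problem, with MAX-CUT being the most convenient choice since it is APX-hard in general graphs (Papadimitriou--Yannakakis) and its objective has exactly the pairwise structure that \mapmrf{} accommodates.

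Concretely, given an instance $G = (V,E)$ of MAX-CUT, I would build a \mapmrf{} instance on the same graph $G$ using the label set $\LL = \{0,1\}$, setting every singleton potential $\phi_i \equiv 0$ and every pairwise potential
\[
\psi_{ij}(a,b) = \begin{cases} 1 & \text{if } a \ne b, \\ 0 & \text{if } a = b. \end{cases}
\]
All potentials are nonnegative, as required. Any labeling $\bm{x} : V \to \{0,1\}$ defines a bipartition of $V$, and under this correspondence the MRF objective $H(\bm{x})$ equals exactly the number of cut edges. So feasible MAP assignments of value at least $k$ are in bijection with cuts of size at least $k$; in particular the optima coincide. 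Since the reduction preserves the objective exactly, an $\alpha$-approximation for \mapmrf{} would yield the same $\alpha$-approximation for MAX-CUT, and the H{\aa}stad-style APX-hardness of MAX-CUT then provides a constant $\alpha > 0$ below which approximation is impossible unless P = NP.

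There is no substantive obstacle here; the only thing to be careful about is verifying that the objective values match up edge-for-edge (so that the approximation ratio is preserved, not merely the existence of an exact solution) and that all functions take nonnegative values, so that the hardness applies in the regime complementary to the positive algorithmic result of the paper. One could equally well reduce from MAX-2SAT or MAX-$k$-CUT with an analogous two-line construction, but MAX-CUT keeps the label alphabet as small as possible and makes the bijection with cuts transparent.
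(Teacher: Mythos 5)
Your proposal is correct and matches the paper's proof essentially exactly: the same reduction from \textsc{Max Cut} with $\phi_i \equiv 0$ and $\psi_{ij}(a,b) = 1$ iff $a \neq b$, preserving the objective value edge-for-edge. The only cosmetic difference is which inapproximability result for \textsc{Max Cut} you invoke (the paper cites Trevisan's $60/61$ bound), which does not affect the argument.
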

}

\later{%
\begin{proof}
\textsc{Maximum Cut} is NP-hard to approximate to better than a~$60/61$
  factor~\cite{Trevisan}.
There is an approximation-preserving reduction from \textsc{Max Cut} to \mapmrf{},
  by setting~$\phi_i(x_i) = 0$ and~$\psi_{ij}(x_i, x_j)$ to be 1
  if~$x_i \neq x_j$ and 0 otherwise.
\end{proof}
}

In light of this, we focus on planar graphs
  as many real-world instances, such as those from computer vision,
  are planar or nearly planar.
It turns out that \mapmrf{} is still NP-hard on planar graphs~\cite{Barahona}.
However, restricting our attention to planar graphs allows for much better
  approximation algorithms.

We will additionally require $\phi_i$ and $\psi_{ij}$ to be nonnegative.
By setting
\begin{align*}
  \phi_i'(x) &= \phi_i(x)-\min_{a\in \LL}(\phi_i(a)) && \forall \; i\in V, x\in \LL && \text{and} \\
  \psi_{ij}'(x,y) &= \psi_{ij}(x,y)-\min_{a,b\in \LL}(\psi_{ij}(a,b)) && \forall \; (i,j) \in E,\; \forall \; x,y\in \LL,
\end{align*}
we can transform an instance with general weights into an instance
with non-negative weights with the same optimal assignment.
However, this changes the value of the objective function,
  and thus also the approximation ratio.
This restriction is necessary, as with general weights
  \mapmrf{} is impossible to approximate unless P = NP\@.
In particular:

\both{%
\begin{theorem}\label{thm:minimization-hard-to-approximate}
  The existence of an algorithm approximating \mapmrf{} on planar graphs
  with maximum degree 4 and nonpositive $\phi_i$ and $\psi_{ij}$
  to any multiplicative factor implies P = NP\@.
\end{theorem}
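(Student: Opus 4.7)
The plan is to exploit the fact that if we are maximizing a sum of nonpositive quantities, the optimum equals $0$ precisely when every term can be made to equal $0$ simultaneously, and in that case any multiplicative $\alpha$-approximation (with $\alpha>0$) must also return $0$. So the proof boils down to encoding an NP-hard decision problem on planar graphs of maximum degree $4$ as the question ``is $\OPT = 0$?'' for an instance with nonpositive $\phi$ and $\psi$.

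The natural source problem is $3$-colorability of a planar graph of maximum degree $4$, which is NP-hard by the classical result of Garey, Johnson, and Stockmeyer. Given such a graph $G=(V,E)$, I will build a \mapmrf{} instance on the same graph $G$ (so planarity and the degree bound are preserved) with label set $\LL=\{1,2,3\}$, unary potentials $\phi_i\equiv 0$, and pairwise potentials
\[
\psi_{ij}(x,y) \;=\; \begin{cases} -1 & \text{if } x=y,\\ 0 & \text{otherwise.}\end{cases}
\]
All potentials are nonpositive. For any assignment $\bm{x}$, $H(\bm{x})$ equals the negative of the number of monochromatic edges, so $H(\bm{x})=0$ iff $\bm{x}$ is a proper $3$-coloring; otherwise $H(\bm{x})\le -1$. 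Thus $\OPT=0$ exactly when $G$ is $3$-colorable, and $\OPT\le -1$ otherwise.

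Now suppose, for contradiction, that for some $\alpha>0$ there is a polynomial-time algorithm $\mathcal{A}$ that on every such instance returns an assignment of value at least $\alpha\cdot\OPT$ (equivalently, within any fixed multiplicative factor of $\OPT$). If $\OPT=0$ then $\mathcal{A}$ must return an assignment of value $\ge \alpha\cdot 0 = 0$; since all weights are nonpositive, the value is exactly $0$, certifying $3$-colorability. If $\OPT<0$ then $\mathcal{A}$'s assignment has value $\le \OPT<0$, so $\mathcal{A}$ never reports $0$. Therefore $\mathcal{A}$ decides $3$-colorability of $G$ in polynomial time, forcing P=NP.

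The only subtle point is to state the claim so that it genuinely covers ``any multiplicative factor'' regardless of sign conventions for approximating a nonpositive maximum; I will phrase it as the existence of $\alpha>0$ with $\text{ALG}\ge \alpha\cdot \OPT$ (or symmetrically $\text{ALG}/\OPT \in [\alpha,1/\alpha]$ when both are nonzero), noting that in every reasonable convention the value $0$ is only multiplicatively approximable by $0$, which is what makes the argument go through uniformly.
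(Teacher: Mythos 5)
Your proposal is correct and matches the paper's own proof essentially verbatim: both reduce from $3$-colorability of planar graphs with maximum degree $4$ using $\phi_i \equiv 0$ and $\psi_{ij}(x,y) = -1$ if $x=y$ and $0$ otherwise, and both observe that a multiplicative approximation must return an assignment of value $0$ exactly when one exists. Your extra care about sign conventions for approximating a nonpositive maximum is a reasonable addition but not a substantive departure.
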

}

\later{%
\begin{proof}
Proof of this theorem is a modification of a proof
  of a weaker theorem by Wang~\cite{Wang}.

Given a planar graph $G$,
  we construct an \mapmrf{} instance which has a score of 0
  if and only if $G$ is 3-colorable. 
As planar 3-colorability is NP-complete even on planar graphs of
  maximum degree 4~\cite{color},
  and an approximation algorithm to a multiplicative factor must find
  a solution of weight 0 if one exists,
  this implies the theorem.

The \mapmrf{} instance operates on $G$ with $L=3$ and
  functions
\begin{align*}
  \phi_i(x) &= 0 && \forall\; x \in \{1,2,3\}, i\in V,\\
  \psi_{i,j}(x,y) &= \begin{cases} 0 &\text{if } x \neq y\\
  -1 &\text{if } x = y\\
  \end{cases}
  && \forall\; (i,j)\in E
\end{align*}

An assignment of score 0 is a 3-coloring where the labels are colors;
  the coloring is proper, as any edge with both endpoints of the same color
  would imply the value of the \mapmrf{} instance is negative.
Similarly, a 3-coloring induces an assignment of score 0.
\end{proof}
}

In many applications, MRF is used to minimize an energy function.
Notice that this is equivalent to maximizing the negative energy function. 
Thus Theorem~\ref{thm:minimization-hard-to-approximate} implies
  minimization is inapproximable to any multiplicative factor,
  even if the energy function is nonnegative.

A \emph{polynomial-time approximation scheme} (PTAS) is an algorithm that, given
  an instance of a maximization (minimization) problem and a
  precision parameter $0 < \ep < 1$,
  returns a $(1-\ep)$-approximate ($(1+\ep)$-approximate, resp.) solution
  in time polynomial
  in the size of the instance (with a possible exponential dependence
  on $1/\ep$).
An \emph{efficient} PTAS (EPTAS) is one with runtime of
  the form~$O(f(\ep) \poly(n))$,
  where~$n$ is the size of the instance and $f$ is a
  computable function.

Our main result is:
\begin{theorem}\label{thm:main}
  There is a PTAS for \mapmrf{} in planar graphs
    when all $\phi$ and $\psi$ are nonnegative functions.
\end{theorem}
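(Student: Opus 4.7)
The plan is to apply Baker's layering technique, the standard vehicle for PTASes on planar graphs. Fix a parameter $k = \lceil 2/\ep \rceil$, pick an arbitrary root $r \in V$, and compute BFS layers $V_0, V_1, V_2, \ldots$, where $V_d$ is the set of vertices at BFS distance $d$ from $r$. For each shift $s \in \{0, 1, \ldots, k-1\}$, let $G_s$ be the subgraph obtained by deleting every vertex whose BFS layer index is congruent to $s \bmod k$. Each connected component of $G_s$ spans at most $k-1$ consecutive BFS layers, hence is $(k-1)$-outerplanar and therefore has treewidth $O(k)$ by Bodlaender's bound.

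For each shift I would solve \mapmrf{} exactly on $G_s$ by the standard bag-DP on a tree decomposition of width $O(k)$: for every bag and every labeling of its $O(k)$ vertices, store the best partial objective over assignments to the subtree below, taking care to attribute each $\phi_v$ and $\psi_{uv}$ to exactly one bag. This runs in $L^{O(k)} n$ time per shift. I then extend the resulting labeling to the original graph by giving every deleted vertex any fixed label (say $1$); nonnegativity of $\psi$ means these extensions can only add weight. The algorithm outputs the best assignment produced over all $k$ shifts.

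For the analysis, let $\bm{x}^*$ be an optimal assignment and let $C_s$ denote the contribution of $\bm{x}^*$ to $H$ that is lost when we pass to $G_s$, namely the sum of $\phi_v(x^*_v)$ over deleted $v$ plus the sum of $\psi_{uv}(x^*_u, x^*_v)$ over edges with at least one deleted endpoint. Each vertex is deleted in exactly one shift, and each edge of a BFS-layered graph joins vertices whose depths differ by at most one, so each edge is charged to at most two shifts. Hence $\sum_{s=0}^{k-1} C_s \leq 2 H(\bm{x}^*) = 2 \OPT$, and averaging yields a shift $s^*$ with $C_{s^*} \leq 2 \OPT / k \leq \ep \OPT$. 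Since all weights are nonnegative, the restriction of $\bm{x}^*$ to $G_{s^*}$ is a feasible assignment on $G_{s^*}$ of value at least $\OPT - C_{s^*}$, so the exact solve on $G_{s^*}$ returns a labeling of value at least $(1-\ep)\OPT$, and extending to deleted vertices only adds nonnegative terms.

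The main obstacle is largely bookkeeping rather than mathematical: the tree-decomposition DP must handle singleton and pairwise terms simultaneously without double counting, and one must verify that the BFS-layer components really do have the claimed outerplanarity after deletion. The approximation analysis is a direct Baker-style averaging argument; the only subtlety beyond the vertex-only setting (maximum independent set, etc.) is that edge weights may be charged to two shifts rather than one, which is why the right choice of parameter is $k = \lceil 2/\ep \rceil$ instead of $\lceil 1/\ep \rceil$.
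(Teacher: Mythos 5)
Your proposal is correct, and it is recognizably the same Baker-style strategy as the paper's (layer the graph from a root, delete every $k$-th layer, solve each bounded-width piece exactly, average over shifts). But the decomposition you choose is genuinely different, and the difference is worth noting. You delete \emph{vertices} in BFS layers congruent to $s \bmod k$; as a consequence you lose the $\phi_v$ terms of deleted vertices, each edge can be charged to two shifts (its endpoints lie in consecutive layers), you need $k = \lceil 2/\ep\rceil$, and you need an extra step extending the labeling to deleted vertices, which is where nonnegativity enters for you. The paper instead deletes \emph{edges} whose $r$-level is congruent to $j \bmod k$: every vertex survives in every subproblem, each between-level edge is absent from exactly one $G_j$ and same-level edges from none, so $\sum_j d_j \le H(\bm{x}^*)$ and $k = 1/\ep$ suffices with no extension step. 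The paper then gets bounded branchwidth for each component of $G_j$ via Tamaki's algorithm (every vertex is within distance $k$ of the face containing $r$) and solves exactly by the branch-decomposition dynamic program of Theorem~\ref{thm:bw}, whereas you invoke $(k-1)$-outerplanarity, Bodlaender's treewidth bound, and a bag DP --- an equivalent amount of work packaged differently. Net effect: both arguments are sound; the paper's edge-deletion variant is slightly tighter in the constant (a $1/\ep$ rather than $2/\ep$ exponent in the $L^{O(k)}$ running time) and avoids the bookkeeping of reinserting deleted vertices, while yours is the more textbook-standard form of Baker's technique and works essentially verbatim.
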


We also consider the closely related \textsc{Correlation Clustering} problem.
In this, one is given a graph and tasked with partitioning the vertices
  into an arbitrary number of clusters.
The edges have associated rewards and preferences as to
  whether their endpoints should or should not be
  in the same cluster; the objective function is the
  sum of the weights of the edges whose preferences are satisfied.
\textsc{Correlation Clustering} is sometimes expressed with a penalty for
  unsatisfied edges in addition to, or instead of,
  a reward for satisfied edges.
These formulations all have the same optimal solution, but as in \mapmrf{},
  the value of the objective function changes,
  and thus approximation results may differ as well.

Formally, the version we will address is:
\begin{mdframed}[nobreak=true]
  \noindent \textsc{Correlation Clustering}

  \begin{description}
    \item[Instance:]\ 
      \begin{itemize}
        \item graph $G = (V, E)$,
        \item edge preferences $p : E \to \{0, 1\}$,
        \item edge reward function $w : E \to \mathbb{R}_{\geq 0}$.
      \end{itemize}

    \item[Solution:] a partition of the vertices into clusters.

    \item[Maximize:]
      \[
        \sum_{(u,v) \in E} w(u,v)\left[ (1-p(u,v))C(u,v) + p(u,v)(1-C(u,v)) \right]
      \]
      where $C(u,v)$ is 1 if $u$ and $v$ belong
      to the same cluster and 0 otherwise.
  \end{description}

\end{mdframed}

Via a simple reduction to \mapmrf{}: 
\both{%
\begin{corollary}\label{cor:cc}
  There is an EPTAS for \textsc{Correlation Clustering} in planar graphs.
\end{corollary}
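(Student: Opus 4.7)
The plan is to give an approximation-preserving reduction from \textsc{Correlation Clustering} to \mapmrf{} and then apply Theorem~\ref{thm:main}.

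First, from a given \textsc{Correlation Clustering} instance $(G, p, w)$ with $n = |V|$, I would build an \mapmrf{} instance on the same graph $G$, using label set $\LL = \{1, \ldots, n\}$ (large enough that every partition of $V$ can be encoded), vanishing singleton potentials $\phi_i \equiv 0$, and pairwise potentials
\[
\psi_{ij}(x,y) = w(i,j)\bigl[(1-p(i,j))\,\mathbf{1}[x=y] + p(i,j)\,\mathbf{1}[x\neq y]\bigr]
\]
for each edge $(i,j)\in E$. All of these are nonnegative, so Theorem~\ref{thm:main} applies. Next I would establish the correspondence: a label assignment $\bm{x}$ encodes the clustering in which two vertices share a cluster iff they share a label, and a term-by-term comparison shows $H(\bm{x})$ equals the correlation-clustering objective of the encoded clustering; conversely, any clustering arises from some labeling. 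Hence optima and multiplicative approximation ratios are preserved, and feeding the reduced instance to the PTAS of Theorem~\ref{thm:main} returns a $(1-\ep)$-approximate clustering.

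The expected main obstacle is upgrading this PTAS to an EPTAS, since the label-set size here is $L = n$ rather than a constant, and a naive analysis gives only a $(1-\ep)$-approximation in time $n^{O(1/\ep)}$. The key observation to exploit is that the reduced \mapmrf{} instance is invariant under permutations of $\LL$: its potentials depend only on whether $x = y$ or $x \neq y$. I would use this symmetry within the dynamic program underlying Theorem~\ref{thm:main} by tracking, on each bounded-size boundary arising in the decomposition, the partition its vertices induce on clusters rather than an explicit labeling. The number of such partitions is a Bell number in a quantity depending only on $\ep$, so this replaces the $L^{O(1/\ep)}$ factor with one of the form $f(\ep)$. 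Verifying that the PTAS of Theorem~\ref{thm:main} admits such a partition-based DP state without sacrificing its approximation guarantee is the core technical challenge; once that is done, the overall running time becomes $f(\ep)\cdot\poly(n)$, yielding the EPTAS.
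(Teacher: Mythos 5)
Your reduction is sound and its forward/backward correspondence is simpler than the paper's (with $L=n$, ``same label iff same cluster'' is exact in both directions), so you immediately get a \emph{PTAS}. But you have correctly identified that this does not give the claimed \emph{EPTAS}, and the route you propose to close that gap is genuinely different from the paper's and substantially harder. The paper instead sets $L=4$: given any clustering, contract each cluster of the planar graph to a supervertex (splitting disconnected clusters into their components costs nothing, since no edge joins them), apply the four-color theorem to the contracted planar graph, and label each vertex by its cluster's color; conversely, clusters are read off a labeling as maximal monochromatic connected subgraphs. Since the objective only inspects edges, this $4$-labeling has exactly the clustering's value, and $L=O(1)$ turns the $O(|E|k^2L^{4k})$ running time of Theorem~\ref{thm:main} into $f(\ep)\cdot O(|E|)$ with no change to the algorithm whatsoever.

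Your alternative --- exploiting label-permutation invariance to replace the $L^{O(k)}$ boundary states with partitions of the boundary --- can in principle be made to work and would even generalize beyond planarity (it does not use the four-color theorem), but as written it is a plan rather than a proof: you defer exactly the step that carries the EPTAS claim. The needed modifications to the dynamic program of Theorem~\ref{thm:bw} are not purely cosmetic. The natural state ``partition of $\delta(G(v))$ induced by the global clustering'' must be combined at internal nodes by enumerating partitions of $U=\delta(G(u_1))\cup\delta(G(u_2))$ and checking that the children's partial partitions can be glued without spuriously merging blocks through the other child (which would change the recorded contribution of edges of $G(u_i)$ joining two boundary vertices). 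This is a standard but fiddly argument, and the approximation analysis of Theorem~\ref{thm:main} must then be re-verified against the modified table. Since the statement you are proving is a corollary, you should either carry out that DP in full or adopt the four-color shortcut, which reduces the whole proof to a two-line reduction plus a direct appeal to Theorem~\ref{thm:main}.
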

}

\later{%
\begin{proof}
We present an approximation-preserving reduction
  from \textsc{Correlation Clustering} to \mapmrf; with that,
  Theorem~\ref{thm:main} gives the result.

Given an instance $\langle G,w,p\rangle$ of \textsc{Correlation Clustering}
    where $G$ is planar,
    we construct an instance of \mapmrf{} with the same graph,
    $L=4$, $\phi_v(x_v)=0$ for all $v\in V, x_v\in \{1,2,3,4\}$, and
    \[
      \psi_{uv}(x_u, x_v) = \begin{cases}
        w(u,v) & \text{if } p(u,v) = 0 \text{\ and } x_u = x_v \\
        0      & \text{if } p(u,v) = 0 \text{\ and } x_u \neq x_v \\
        0      & \text{if } p(u,v) = 1 \text{\ and } x_u = x_v \\
        w(u,v) & \text{if } p(u,v) = 1 \text{\ and } x_u \neq x_v
      \end{cases}.
    \]
    
If $\bm{x}$ is an assignment to this \mapmrf{} instance,
  we make a cluster out of each maximal connected subgraph
  with the same label.

Edges with endpoints of different labels are exactly
  the edges between clusters, so the value of this \textsc{Correlation Clustering}
  solution is the same as the value of $\bm{x}$.

In the other direction,
  we contract each cluster of a given partition down into a single supervertex to
  yield graph $G'$, which is also planar.
By the 4-color theorem,
  there exists an assignment of the labels $\{1,2,3,4\}$
  to the vertices of $G'$ such that no adjacent vertices have the same label. 

Give each vertex in $G$ the same label as the corresponding supervertex in $G'$.
Edges within a cluster have both edges corresponding to the same supervertex,
  and thus they have the same label.
Edges between clusters have corresponding edges in $G'$,
  and thus have endpoints with different labels.
Thus the value of the assignment is exactly the value of the partition.

Both the creation of the corresponding \mapmrf{} instance and the
  conversion of a solution of that instance to
  a solution of \textsc{Correlation Clustering} take time linear
  in the size of the input.
Thus there is a linear time approximation-preserving reduction,
  which, in conjunction with Theorem~\ref{thm:main} completes the proof.
Note that while the PTAS for \mapmrf{} is \emph{not} an \emph{efficient} PTAS,
  this one is, because $L=4=O(1)$.
\end{proof}
}

\subsection{Outline}

In Section~\ref{sec:prior},
  we review past work on~\mapmrf{}.
Next, in Section~\ref{sec:bw}, we give an exact algorithm for
  graphs of bounded branchwidth.
Then, Section~\ref{sec:ptas} proves Theorem~\ref{thm:main}.
\ifabstract%
In the interest of space, proofs of Corollary~\ref{cor:cc} and Theorems~\ref{thm:general-hard-to-approximate} 
and~\ref{thm:minimization-hard-to-approximate} are in the appendix.
\fi
We demonstrate some promising experimental results in Section~\ref{sec:experiments} with applications to computer vision.
Finally, we offer discussion in Section~\ref{sec:conclusions}.

\section{Prior Work}\label{sec:prior}

Markov Random Fields originated in statistical physics as a
  generalization of the Ising Model~\cite{MRFapps}.
There are numerous techniques to solve \mapmrf{}, both in general and
  on specific instances; some are outlined here.

An MRF is \emph{binary} if there are exactly two labels and \emph{submodular}
  if for all~$u,v \in V$,~for all $i, j \in \{1,\ldots, L\}$,
  $\psi_{u,v}(i,i)+\psi_{u,v}(j,j) \geq \psi_{u,v}(i,j)+\psi_{u,v}(j,i)$.

If an MRF is both binary and submodular, \mapmrf{} can be solved exactly
in polynomial time by reduction to \textsc{Min-Cut}~\cite{boy}.
If the graph is also planar,
  the running time can be improved to $O(n \log(n))$~\cite{schmidt}.

For MRF in graphs which have bounded degree and an excluded
  minor (which includes all bounded degree planar graphs)
  Jung and Shah use techniques similar to ours to find a
  PTAS with running time doubly exponential in $1/\ep$~\cite{Jung}.
For the alternate formulation of \textsc{Correlation Clustering} which seeks to minimize penalties for unsatisfied edges, Klein et al.\ demonstrate a non-efficient PTAS in planar graphs~\cite{CC}.

When $\psi_{ij}$ are defined by a metric on the labels,
  the problem is referred to as
  \textsc{Metric Labeling};~\cite{KT02} provides a
  $O(\log L \log\log L)$-approximation algorithm
  for the problem.

The \textsc{Generalized Potts Model}, from statistical mechanics, is a restriction
of MRF that reduces to the classic \textsc{Multiway Cut} problem;~\cite{BVZ98} uses
local search to approximate this model.
\textsc{Multiway Cut} is a special case
  of \textsc{Metric Labeling}, where some vertices are forced
  to have particular labels.
In planar graphs, there is a PTAS for the problem~\cite{KleinMTC}.
In general, there are constant-factor approximations~\cite{DahlhausJPSY92}.

\textsc{0-Extension} is a generalization of \textsc{Multiway Cut}
in which the cost of the edge depends on the specific terminals
associated with the edge's endpoints, not just whether the terminals are the same.
In general graphs, this problem
  is~$O(\log L / \log \log L)$-approximable~\cite{Karzanov98,CKR01b,FHRT03}
  and can be approximated to a constant-factor in planar graphs.

Various heuristics exist to approximate MAP on planar graphs and are used extensively
  in computer vision for applications such as:
\begin{itemize}
  \item Stereo vision: given two photographs taken side-by-side,
          estimate the depths of each pixel.
  \item Object segmentation: find the boundaries of objects in photographs.
  \item De-noising: remove grainy noise from an image.
  \item Photomontage: combine several images into one.
\end{itemize}
Two standard benchmarks for these problems are OpenGM~\cite{opengm} 
and the Middlebury stereo dataset~\cite{middlebury}.
For a detailed treatment of MRF as applied to computer vision,
  see, e.g.,~\cite{yarkony}.

Many problems, including \mapmrf{} and more traditional optimization
  problems such as \textsc{TSP}, \textsc{Steiner Tree}, \textsc{Vertex Cover}, 
  \textsc{Graph Coloring}, \textsc{Clique}, \textsc{Hamiltonian path},
  and \textsc{Feedback Vertex Set} can be solved exactly in polynomial time
  on graphs of bounded \emph{branchwidth}. 
Branchwidth, like treewidth, pathwidth, bandwidth, outerplanarity,
  or cliquewidth, is a measure
  of the ``simplicity'' of a graph.
These measures are amenable to dynamic programming and have been
  of great importance when designing approximation schemes
  on planar graphs~\cite{baker,bidim,KleinTSP,KleinSteiner}.

Our algorithm draws inspiration from Baker's technique~\cite{baker},
  a powerful framework for building PTASes in planar graphs.
In a nutshell, Baker guesses a way to decompose a graph
  into a number of smaller graphs of bounded outerplanarity.
These smaller graphs are each solved optimally and independently,
  and then combining the solutions incurs at most $\ep \OPT$ error.
This technique was originally applied to \textsc{Independent Set} but can be
  used for a number of problems, such as \textsc{Vertex Cover}, 
  \textsc{Edge-Disjoint Triangles}, and
  \textsc{Dominating Set}~\cite{baker}.

Recently, Wang posted a manuscript on
  arXiv claiming a PTAS for \mapmrf{}
  on planar graphs, among other results~\cite{Wang}.
We remark that our main result, Theorem~\ref{thm:main},
  was discovered independently.
Theorem~\ref{thm:minimization-hard-to-approximate} draws inspiration from
  and strengthens a hardness proof of Wang.
Unfortunately, there appears to be a bug in an
  vital lemma in~\cite{Wang}.
\ifabstract%
We discuss this in Appendix~\ref{sec:wang}.
\fi

\later{%
\ifabstract%
\section{Discussion of~\cite{Wang}}\label{sec:wang}
\fi
Lemma~4.2 of~\cite{Wang} is critical to the correctness of Wang's PTAS\@;
  as presented, it has some problems.

The stated runtime does not account for the
  degree of the graph;
  $f_i$ has $L^{d+1}$ possible outputs if vertex $v_i$ has degree $d$; 
    all possible outputs must be examined to ensure correctness.

Additionally, $S^U_{i \setminus p_i}$ is defined to be the max-sum of
  the liberal functions attached to vertices
  of $(U \cap V_{T_i}) \setminus (X_{p_i} \cup \delta X_{p_i})$.
In a nice tree decomposition of a star, that resulting set is empty for
  all $i$ except the root $r$, which means that
  the entire value of $S^U_{r \setminus p_r}$ is $\Gamma^{\sigma_{i \setminus p_i}}_{X_r \setminus X_{p_r}}$.
$\Gamma^{\sigma_{i \setminus p_i}}_{X_r \setminus X_{p_r}}$,
  in this case, is defined to be the sum of liberal functions attached
  to every vertex in the star when the configuration of \emph{just} the root
  is fixed to be $\sigma_{i \setminus p_i}$.
So, calculating $\Gamma^{\sigma_{i \setminus p_i}}_{X_r \setminus X_{p_r}}$
  is equivalent to solving the original problem
  and how it is calculated is not specified.
}

\section{\mapmrf{} in Bounded Branchwidth Graphs}\label{sec:bw}

A branch decomposition of a graph~$G=(V,E)$ is an unrooted binary
  tree~$T$ whose leaves are the edges $E$ of $G$. 
Deleting an edge of~$T$ generates two subgraphs of~$G$,
  each induced by the edges in one component of $T$.
Some vertices are contained in both subgraphs. 
The maximum number of these overlapping vertices for any such pair
  of subgraphs is the \emph{width} of the decomposition. 
The minimum width of any branch decomposition of $G$ is its \emph{branchwidth}.

Our PTAS is an application of Baker's technique~\cite{baker},
  and works by breaking up the problem into bounded branchwidth subproblems,
  each of which can be solved exactly in polynomial time. 

\begin{theorem}\label{thm:bw}
Given an \mapmrf{} instance $(G = (V, E), L, \phi, \psi)$ and a
branch decomposition $T$ of width $k$,
an optimal solution can be found in time $O(|E|kL^{2k})$.
\end{theorem}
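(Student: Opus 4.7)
The plan is a standard dynamic programming on the rooted branch decomposition. I would first root $T$ at an arbitrary leaf so that each non-root node~$v$ inherits a \emph{middle set} $\mu(v) \subseteq V$ of size at most~$k$, namely the middle set of the edge from $v$ to its parent. Write $E_v$ for the edges of $G$ labelling leaves in the subtree at $v$, and $V_v$ for the endpoints of those edges. The DP stores, for each node~$v$ and each label assignment $x : \mu(v) \to \LL$, the table entry
\[
f_v(x) = \max_{\substack{y : V_v \to \LL \\ y|_{\mu(v)} = x}} \Bigl( \sum_{i \in V_v \setminus \mu(v)} \phi_i(y_i) + \sum_{(i,j) \in E_v} \psi_{ij}(y_i,y_j) \Bigr).
\]
There are at most $L^k$ entries per table. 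Crucially, the vertex potentials for vertices still in $\mu(v)$ are \emph{deferred}, to be added exactly once at the highest node at which the vertex leaves the middle set.

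For a leaf~$\ell$ of $T$ labelled by edge $(i,j)$ I would set $f_\ell(x_i, x_j) = \psi_{ij}(x_i, x_j)$. At an internal node $v$ with children $v_1$ and $v_2$, a short check directly from the definition of middle set shows that $\mu(v) \subseteq \mu(v_1) \cup \mu(v_2)$ and that any vertex in $(\mu(v_1) \cup \mu(v_2)) \setminus \mu(v)$ lies entirely within $V_v$ and appears in no middle set higher in $T$. The tables combine as
\[
f_v(x) = \max_{\substack{y : \mu(v_1) \cup \mu(v_2) \to \LL \\ y|_{\mu(v)} = x}} \Bigl( f_{v_1}(y|_{\mu(v_1)}) + f_{v_2}(y|_{\mu(v_2)}) + \sum_{i \in (\mu(v_1) \cup \mu(v_2)) \setminus \mu(v)} \phi_i(y_i) \Bigr).
\]
Since $|\mu(v_1) \cup \mu(v_2)| \leq 2k$, this step enumerates at most $L^{2k}$ joint assignments, each requiring $O(k)$ work to split, look up, and sum, for a per-node cost of $O(kL^{2k})$. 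At the root the middle set is empty (or handled by a final sweep that absorbs any remaining $\phi_i$), and the optimum value is read off directly; a standard traceback recovers a maximising assignment.

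Correctness follows by induction on the subtree at $v$: the edge sets $E_{v_1}$ and $E_{v_2}$ are disjoint, any vertex shared between $V_{v_1}$ and $V_{v_2}$ must lie in $\mu(v_1) \cap \mu(v_2)$ so the restriction constraints correctly enforce agreement of $y_1$ and $y_2$ at shared coordinates, and each $\phi_i$ fires at the unique node where $i$ exits the middle sets. I expect the main obstacle to be precisely this bookkeeping, which is the classic subtle spot in branch-decomposition DPs: every $\phi_i$ and every $\psi_{ij}$ must be summed exactly once, never more and never fewer. The definition of $f_v$ above is crafted so that pairwise potentials are counted at their leaf and singletons are counted at the unique ``exit'' node. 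The runtime bound is then immediate: $T$ is a binary tree with $|E|$ leaves, so it has $O(|E|)$ internal nodes, each processed in $O(kL^{2k})$ time, for the claimed $O(|E|kL^{2k})$ total.
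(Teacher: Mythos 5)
Your proposal is correct and follows essentially the same route as the paper: a bottom-up dynamic program over the rooted branch decomposition with one table entry per assignment to the (at most $k$) boundary/middle-set vertices, combined at internal nodes over at most $L^{2k}$ joint assignments. The only difference is cosmetic bookkeeping --- you defer each $\phi_i$ until the vertex exits the middle sets, whereas the paper counts $\phi$ at both children and subtracts the duplicates on $\delta(G(u_1)) \cap \delta(G(u_2))$ --- and both yield the same $O(|E|kL^{2k})$ bound.
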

\begin{proof}
We use dynamic programming.
$T$ will guide the dynamic program, and thus we want a root with two children.
To that end, we choose an arbitrary edge of~$T$
  and subdivide it with a new vertex~$r$ that we designate the root. 
Now~$T$ is a rooted binary tree
  but maintains the other properties of a branch decomposition.

With each tree vertex $v \in T$,
  let~$G(v)$ be the subgraph of $G$ induced
  by the edges of~$G$ which are descendants of~$v$. 
Observe that~$G(r) = G$. 
Denote by~$\delta(G(v))$ the vertices of~$G(v)$ which are incident to
  edges not in $G(v)$.
Note that~$|\delta(G(v))| \leq k$ for all~$v \in T$.

For each vertex $v \in T$,
  we will compute the assignment to the vertices~$V(G(v)) - \delta(G(v))$
  for each possible assignment to the vertices~$\delta(G(v))$
  which maximizes the score of the MRF on~$G(v)$.
This is done bottom-up,
  so that for all non-leaf vertices of~$T$,
  assignments for both of their children are computed first.

If $v$ is a leaf, $G(v)$ is a single edge with its endpoints. 
Thus either $V(G(v)) - \delta(G(v))$ is empty and
  finding the optimal assignment is trivial;
  or~$V(G(v)) - \delta(G(v))$ is a single endpoint,
  and all possible label assignments can be tested.
In both cases, it takes~$O(L^2)$ time to test
  for all possible boundary assignments
  what the best assignment to~$V(G(v)) - \delta(G(v))$ is.

If $v$ is not a leaf, it has two children~$u_1, u_2$.
Let~$U = \delta(G(u_1))\cup\delta(G(u_2))$
  and~$I = \delta(G(u_1)) \cap \delta(G(u_2))$.
Notice $\delta(G(v)) \subseteq U$.
For each label assignment to the vertices of $\delta(G(v))$,
  the best assignment to $V(G(v)) - \delta(G(v))$ is the union of best
  assignments to $V(G(u_1)) - \delta(G(u_1))$ and $V(G(u_1)) - \delta(G(u_1))$
  for some assignment to $I - \delta(G(v))$,
  and its value is the sum of the values of those assignments
  minus the values of $\phi$ on $I$.
As those assignments and values have already been computed,
  finding the optimal ones can be done in time 
  $O(|I|L^{|U|})$.
Since~$|I| \leq k$ and~$|U| \leq 2k$,
  computing all the assignments and values at vertex~$v$
  takes time~$O(k^{2k})$.

$\delta(G(r))$ is empty, so the unique assignment and value computed at $r$ are the exact optimal solution to the \mapmrf{} instance. 
The rooted branch decomposition has $2|E|-1$ vertices,
  thus the running time is $O(|E|kL^{2k})$.

\end{proof}

We summarize the algorithm:
\begin{mdframed}[nobreak=true]
\begin{enumerate}
  \item Choose an arbitrary edge $e$ of $T$, and subdivide it with a new root vertex~$r$.
  \item With each vertex $v$ of $T$ associate the subgraph $G(v)$ of $G$ induced by the edges of $G$ which are descendants of $v$ (with respect to the root $r$).
  \item Consider each vertex $v$ of $T$ from leaf to root:
  \begin{enumerate}
    \item If $v$ is a leaf, for each possible label assignment to the vertices of $\delta(G(v))$, by brute force, compute the best assignment to $V(G(v)) \setminus \delta(G(v))$.
    \item Otherwise, for each possible label assignment to the vertices of $\delta(G(v))$, combine the values and assignments of $v$'s two children
        to determine the best assignment to $V(G(v)) \setminus \delta(G(v))$.
  \end{enumerate}
  \item Return the best assignment for $G(r) = G$.
\end{enumerate}
\end{mdframed}

\section{PTAS for \mapmrf{} on Planar Graphs}\label{sec:ptas}
We now give the PTAS for our main result.
As input, we are given an instance of \mapmrf{} $\langle G=(V,E), L, \phi, \psi \rangle$
  where $G$ is a planar graph,
  and a desired error parameter $0 < \ep < 1$,
  with $k = \frac{1}{\ep}$.

Fix some vertex $r$.
We say an edge has $r$-level $d$
    if one of its endpoints is hop-distance
    $d-1$ from $r$ and the other is hop-distance $d$.
Let $G_j$ be the graph resulting in deleting all edges 
  with $r$-levels congruent to $j \pmod{k}$.

The algorithm is:
\begin{mdframed}[nobreak=true]
\begin{enumerate}
  \item Choose a vertex $r$ arbitrarily.
  \item Let $k = \frac{1}{\ep}$.
  \item For each $j \in \{0,\ldots,k-1\}$:
    \begin{enumerate}
      \item Compute $G_j$.
      \item Find an approximate branch decomposition $T$ of each
            component of $G_j$ using the algorithm in~\cite{Tamaki}.
      \item Apply Theorem~\ref{thm:bw} to each component of $G_j$ and combine
            the resulting best label assignments into $\bm{x}_j$.
      \item Compute the value $h_j$ of the objective function on $G$
              from $\bm{x}_j$.
    \end{enumerate}
  \item Return the assignment corresponding to the largest $h_j$.
\end{enumerate}
\end{mdframed}

With this, we are ready to prove our main result.

\begin{proof}[of Theorem~\ref{thm:main}]
First, we tackle the runtime.
For each $j$, it takes linear time to construct $G_j$ by building a breadth first search tree from $r$.

By construction, there exists in each component of $G_j$ a path of length at most $k$ from each vertex to a vertex on the face containing $r$. An algorithm by Tamaki~\cite{Tamaki} allows us to construct a branch decomposition of width at most $2k$ on a graph with this property in time $O(m_i2^{2k})$, where $m_i$ is the number of edges in the component.

Then, solving these optimally using~\ref{thm:bw} and combining takes time $O(|E|kL^{4k})$.
As we try $k$ different choices of $j$,
 the total running time is $O(|E|k^2L^{4k})$.
This is linear in the size of the graph, as $k$ is a function of $\ep$.
However, as $L$ is part of the input, this is not an efficient PTAS\@.

Now, we demonstrate correctness.
Let $\bm{x}^*$ be an optimal label assignment.
By construction, $\bm{x}_j$ is the optimal assignment on $G_j$.
Let $H_j$ be the objective function restricted to $G_j$.
Since~$\bm{x}_j$ consists of optimal solutions of each component of~$G_j$,~$H_j(\bm{x}_j) \geq H_j(\bm{x}^*) \label{eq1}$.

Let~$d_j = H(\bm{x}^*) - H_j(\bm{x}^*)$.
So we have~$H(\bm{x}_j) \geq H(\bm{x}^*) - d_j$.
Summing over all choices of~$j$,
\[
\sum_{j=0}^{k-1} H(\bm{x}_j) \geq \sum_{j=0}^{k-1} H(\bm{x}^*) - d_j(\bm{x}^*).
\]

Each edge in $G$ is missing from at most one $G_j$, so 
$\sum_{j=0}^{k-1} d_j \leq H(\bm{x}^*)$.
Thus,
\begin{align*}
\sum\limits_{j=0}^{k-1} H(\bm{x}_j) &\geq k H(\bm{x}^*) - H(\bm{x}^*) 
  = k(1-1/k)H(\bm{x}^*) 
  = k(1-\ep)H(\bm{x}^*).
\end{align*}

Consequently, there exists some $j$ where
$H(\bm{x}_j) \geq (1-\ep)H(\bm{x}^*)$.
\end{proof}

\section{Experiments}\label{sec:experiments}

The approximation scheme has relatively small constants,
  which suggested that it might be feasible to use in practice.
We implemented a version of this PTAS in \verb|C++11|
  for tasks that arise in computer vision.
For simplicity, we restricted our implementation to grid graphs, as
  is common in image processing.
Optimal branch decompositions are particularly easy to find in this domain.

\subsection{Stereo Matching}

Given two images representing a left camera angle and a right camera angle
  and a number $L$ of relative depth labels, we wish to assign a label
  in $\{1, \ldots , L\}$ to each pixel in the, say, left image.
In the computer vision community, these are often visualized as
  \emph{disparity maps}, or grayscale images of the relative depths;
  see e.g.~Figure~\ref{fig:artifacts}.
We use the 16 label \emph{tsukuba} example from the Middlebury stereo benchmark~\cite{middlebury} for illustration here:

\begin{figure*}[h]
\centering
\mbox{\begin{subfigure}[b]{0.5\textwidth}
\centering
\includegraphics[width=6cm]{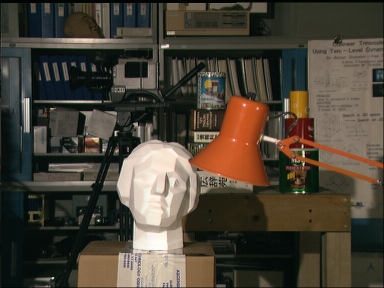}
\caption{Left Image}
\end{subfigure}

\begin{subfigure}[b]{0.5\textwidth}
\centering
\includegraphics[width=6cm]{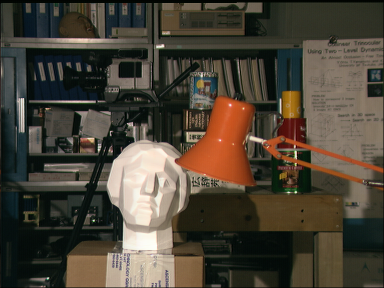}
\caption{Right image}
\end{subfigure}
}
\caption{\emph{Tsukuba} images from the Middlebury stereo benchmark.}
\end{figure*}

We used the following model as input to our algorithm.
The graph $G = (V, E)$ is the planar grid graph
  where each vertex represents a pixel.
We define functions
\begin{align*}
  \phi_{u}(i) &= \beta-\|u - u^{(-i)}\|_2^2 && \forall \; u \in V \\
  \psi_{u,v}(i,j) &= \begin{cases} 0 & \text{if $i=j$} \\
      \beta-\|u - v\|_2^2 & \text{if } i\neq j \end{cases} && \forall \; (u, v) \in E
\end{align*}
where $u$ is a pixel in the left image, $u^{(-i)}$ is the pixel
  that is $i$ columns to the left of the pixel corresponding to $u$ 
  in the right image, $\|\cdot \|_2^2$ is square 2-norm in CIELUV color space,
  and $\beta$ is a constant sufficiently large to ensure that all outputs of
    the functions are positive. 

In addition to our basic algorithm, we also incorporate a few very simple
  vision-specific heuristics to refine our results.
Initializing boundary pixels to the values
  from the previous (either left or right) connected component
  yields more visually continuous results.
Since the analysis of the approximation holds for any
  value of the boundary pixels, in particular it holds for these values.
Thus the approximation guarantee is preserved at this step.
However, this results in some visual artifacts (see Figure~\ref{fig:artifacts}).

\begin{figure*}[h]
\centering
\mbox{\begin{subfigure}[b]{0.5\textwidth}
\centering
\includegraphics[width=6cm]{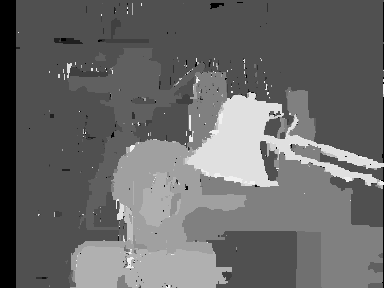}
\end{subfigure}

\begin{subfigure}[b]{0.5\textwidth}
\centering
\includegraphics[width=6cm]{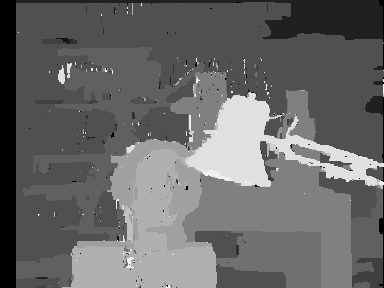}
\end{subfigure}
}
\caption{\label{fig:artifacts} Visual artifacts after one heuristic.}
\end{figure*}

To remedy this, we run the algorithm twice
  (intuitively, left-to-right and then right-to-left)
  and combine the solutions in an approximation-preserving way.

Finally, a tiny amount of smoothing is done to remove remaining noise;
  this does not guarantee the approximation but leads to more visually-pleasing
  results.

\later{%
\newpage
\section{Additional Figures}
\begin{figure*}[h]
\centering
\mbox{\begin{subfigure}[b]{0.5\textwidth}
\centering
\includegraphics[width=6cm]{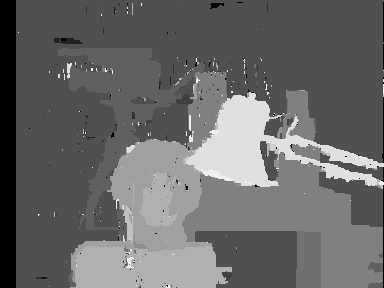}
\caption{With passes combined}
\end{subfigure}

\begin{subfigure}[b]{0.5\textwidth}
\centering
\includegraphics[width=6cm]{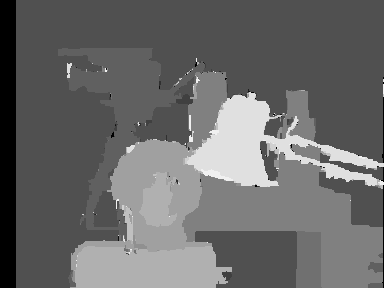}
\caption{with smoothing}
\end{subfigure}
}
\mbox{\begin{subfigure}[b]{0.5\textwidth}
\centering
\includegraphics[width=6cm]{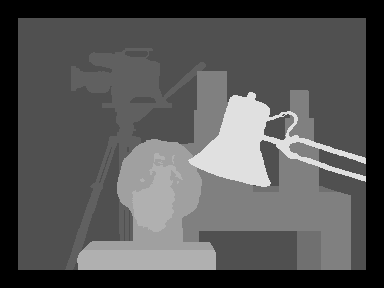}
\caption{Ground truth for comparison}
\end{subfigure}

\begin{subfigure}[b]{0.5\textwidth}
\centering
\includegraphics[width=6cm]{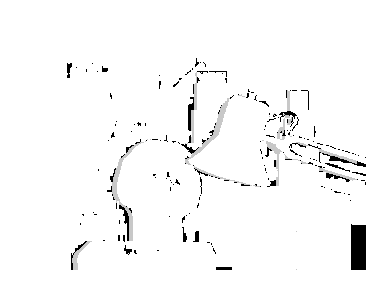}
\caption{\label{fig:mislabeled} Mislabeled pixels highlighted}
\end{subfigure}
}
\caption{\label{fig:more-heuristics} Our results on \emph{tsukuba} with heuristics applied.}
\end{figure*}
}

We use the evaluation tools provided on the Middlebury stereo website: 
  5.07\% of all pixels are mislabeled including 3.02\% of non-occluded regions
  and 11.5\% of regions near depth discontinuities.
Furthermore, as seen in Figure~\ref{fig:mislabeled},
  a large fraction of mislabeled pixels are concentrated in the bottom right;
  we believe that discrepancies between the MRF model and the ground truth explain
  this.

State-of-the-art algorithms mislabel a little more than 1\% of pixels
  including typically over 4\% of regions near depth discontinuities.
Many of the published algorithms on the Middlebury benchmark mislabel
  significantly more than 5.07\% of all pixels,
  and the best algorithms involve optimizing dozens of hyperparameters
  and are highly specialized to their applications.

We found that our generic PTAS required only a few basic heuristics to perform
  quite well, suggesting that with a few more heuristics, it could be very
  competitive.

\subsection{Observed $\ep$ dependencies}

Experiments support the theoretical dependencies on $\ep$.
Figures~\ref{fig:score} and~\ref{fig:runtime} show the
  score and log running time, respectively,
  of our algorithm on the \emph{tsukuba} image as a function of $1/\ep$,
  using 14 labels and the learned parameters.
The score changes remarkably little, considering the improvement
  in the theoretical bound.

\begin{figure*}[h]
\centering
\mbox{\begin{subfigure}[t]{0.5\textwidth}
\includegraphics[width=\textwidth]{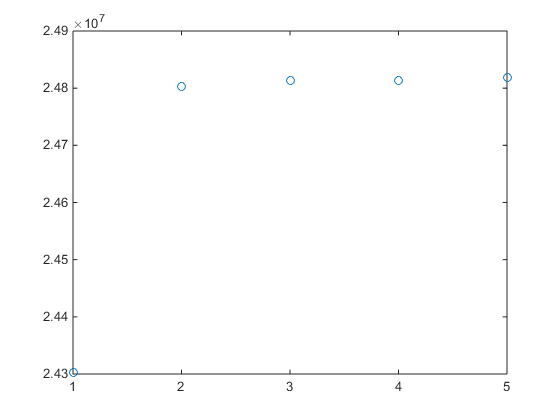}
\caption{\label{fig:score} Score (in arbitrary units) as a function of $1/\ep$.}
\end{subfigure}
\begin{subfigure}[t]{0.5\textwidth}
\includegraphics[width=\textwidth]{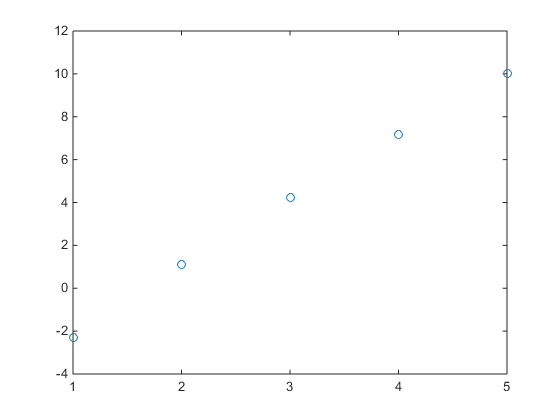}
\caption{\label{fig:runtime} Log of running time (in seconds) as a function of $1/\ep$. Experiments were run on a mid-range 2014 laptop.}
\end{subfigure}
}
\caption{}
\end{figure*}

The running time matches the theory very closely.
The observed ratios of running time as $\ep$ increases
  from 2 to 5 are 23, 18.45, and 17.8;
  the theoretical run time is proportional
  to $1/\ep \cdot L^{1/\ep}$ which would predict ratios of 21, 18.67, and 17.5.

\subsection{OpenGM benchmark}
We used the OpenGM 2.3.3~\cite{opengm} library to benchmark the
  actual energy minimization performance of our algorithm compared to other existing methods.
Our algorithm was run with $\ep = 1/3$.

On the Inpainting benchmark, our algorithm achieves a score of 461.82,
  which is about 1.6\% away from the best algorithm's and
  better than half of the competing algorithms.
On the Object Segmentation benchmark, we perform a bit worse;
  our score is about 64\% away from the best and
  worse than most of the competition.

\section{Discussion \& Conclusions}\label{sec:conclusions}

Our algorithm gives the first known PTAS for maximum a posteriori assignment
  on \mapmrf{}, and the first EPTAS for this variant of
  \textsc{Correlation Clustering} in planar graphs.
Combined with our hardness results,
  much of the complexity of \mapmrf{} on planar graphs is now settled.
While the algorithm is not directly competitive with the state of the art for
  computer vision tasks, it is sufficiently close to those algorithms
  to suggest applications in improving them,
  as well as in other applications which lack specialized algorithms. 

One can readily extend the given PTAS to
  more general classes of graphs, or (non-pairwise) MRFs in planar graphs
  with bounded factor degree.

Compelling future research directions
  include studying \mapmrf{} with
  with negative functions
    and two labels (but not necessarily submodular),
    and with more than two labels but submodular functions.

\bibliographystyle{plain}
\bibliography{Paper}

\ifabstract%
\newpage
\appendix
\section{Elided Proofs}
\magicappendix%
\fi
\end{document}